\newcommand{\beq}{\begin{equation}}
\newcommand{\eeq}{\end{equation}}
\newcommand{\bea}{\begin{eqnarray}}
\newcommand{\eea}{\end{eqnarray}}
\newcommand{\nn}{\nonumber}
\newcommand{\noi}{\noindent}
\newtheorem{definition}{Definition}
\newtheorem{proposition}{Proposition}
\newtheorem{theorem}{Theorem}
\theoremstyle{definition}
\newtheorem{remark}{\textbf{Remark}}
\begin{document}


\title{Universality Classes and Information-Theoretic Measures of Complexity via Group Entropies}

\author{Piergiulio Tempesta}
\address{Instituto de Ciencias Matem\'aticas, C/ Nicol\'as Cabrera, No 13--15, 28049 Madrid, Spain\\ and Departamento de F\'{\i}sica Te\'{o}rica, Facultad de Ciencias F\'{\i}sicas, Universidad
Complutense de Madrid, 28040 -- Madrid, Spain }
\author{Henrik Jeldtoft Jensen}
\address{Centre for Complexity Science and Department of Mathematics, Imperial College London, South Kensington Campus, SW7 2AZ, UK
and \\ Institute of Innovative Research, Tokyo Institute of Technology, 4259, Nagatsuta-cho, Yokohama 226-8502, Japan}
\email{piergiulio.tempesta@icmat.es, ptempest@ucm.es}
\date{October 17, 2019}
\maketitle


\begin{abstract}
We introduce a class of information measures based on group entropies, allowing us to describe the information-theoretical properties of complex systems. These entropic measures are nonadditive, and
are mathematically deduced from a series of natural requirements. In particular, we introduce an extensivity postulate as a natural requirement for an information measure to be meaningful. The information measures proposed are suitably defined for describing universality classes of complex systems, each characterized by a specific phase space growth rate function.
\end{abstract}

\textit{Keywords: Group Entropies, Complex Systems, Information Theory}

\tableofcontents

\section{Introduction}
\subsection{A new perspective on complexity}
The aim of this paper is to propose a general theoretical construction that allows us to associate a given class of complex systems with a suitable information measure adapted to this class, and expressed by an entropic functional mathematically deduced from a set of axioms, belonging to the family of group entropies (\cite{PT2011PRE}, \cite{PT2016AOP}, \cite{PT2016PRA}).

The main idea behind our approach is simple. In a broad range of applications, including physical and social sciences, economics and neurosciences, it is customary to use information measures based on the additive Shannon entropy (and its quantum version, the von Neumann entropy). Standard and useful indicators of complexity commonly adopted in the literature are indeed the mutual information or the relative entropy.

However, instead of using an information entropic measure defined a priori, and based on a (sometimes not fully justified) assumption about additivity, one may proceed differently. We propose to look for new information measures, written in terms of entropic functionals that respect the specific properties of the system, or family of systems, under consideration.


To this aim, we shall prove a theorem that allows to associate with a given universality class of systems a specific entropic measure, constructed in a completely algorithmic way. This measure will be extensive and non-additive, and will depend explicitly on the phase space growth rate function which characterizes the universality class considered. From a mathematical point of view, the derivation of these entropic measure is a direct consequence of an axiomatic approach, based on formal group theory.
Using the group-theoretical entropic measures so defined, we shall construct a new family of information-theoretical measures of complexity.

The deep insights represented by the Tononi-Edelman-Spons Integrated Information concept is traditionally formulated mathematically in terms of sums of conditioned entropies of partitions of the considered system in particular the brain. However this mathematical representation does suffer from limitations \cite{MSB2018}. The group theoretic entropies introduced in the present paper offers an alternative mathematical implementation of the original TES idea, without need of introducing conditioning. We explain below how a new complexity measure based on the group entropies offers a way to characterize the degree of entanglement of brain dynamics and, moreover gives a way to compute the capacity of a neuronal network of a certain size. In Sec. 5.2 we formulate this as a precise mathematical conjecture.

\subsection{A group-theoretical approach to information theory: Group entropies}
Since the work of Boltzmann, perhaps the most relevant problem of statistical mechanics has been the study of the connections between the statistical properties of a complex system at a microscopic level,
and the corresponding macroscopic thermodynamic properties.

The probabilistic point of view of Boltzmann, further developed by Gibbs, Planck and many others, was questioned from the very beginning by Einstein. As is well known, Einstein argued that
probabilities must follow dynamics, and not vice versa. In other words, the frequency of occupation of the different regions of phase space should not be given a priori,
but determined from the equations of motions. A conciliation between these two points of view is still an unsolved problem: as surmised by Cohen  
\cite{Cohen2002}, a combination of
statistics and dynamics is perhaps a necessary way out to describe the statistical mechanics of a complex system.

In this perspective, it is quite natural to hypothesize that the geometry of the phase space associated with a given complex system  plays a crucial role in the characterization of its main  information-theoretical, dynamical and  statistical features.

In this paper we shall try to shed new light on this aspect.
We shall adopt in some sense an ``intermediate" point of view: Indeed, instead of focusing on the dynamics of a specific system, we can consider \textit{universality classes of systems},
defined in the following way.

Let us denote by $W=W(N)$ the phase space growth rate of a given system, i.e. the function describing asymptotically the number of allowed microstates as a function of the number of particles $N$.\footnote{Since we are interested essentially in the large $N$ limit, we can always think of $W(N)$ as an integer number (i.e. we shall identify it with its integer part) for any choice of $W$.}
A universality class of systems is defined to be the set of all possible systems sharing the same growth rate function $W=W(N)$.
For instance, many physical systems will be associated with an occupation law of the form $W(N)=k^{N}$, $k\in\mathbb{N}\backslash\{0\}$.
Other natural choices are, for instance, $W(N)= N^{\alpha}$ and $W(N)= N!$. Generally speaking, we can partition all possible universality classes into three families: the \textit{subexponential}, the \textit{exponential} or the \textit{super-exponential family} depending on whether $W(N)< e^{N}$,  $W(N)= e^{N}$, $W(N)> e^{N}$ for large $N$, respectively.

We will show that by means of a group-theoretical approach, given any universality class one can construct in a purely deductive and axiomatic way an entropic functional representing a suitable information measure for that class.

This approach is clearly inspired by the research on generalized entropies that in the last few decades captured a considerable interest. In particular, we will use the notion of \textit{group entropies}, introduced in \cite{PT2011PRE}, and settled in general terms in \cite{PT2016AOP}, \cite{PT2016PRA} (see also \cite{JT2018ENT} for a recent review). Essentially, a group entropy is a generalized entropy that has associated a group law, which describes how to compose the entropy when we merge two independent systems into a new one.

Said more formally, the physical origin of the group theoretical structure relies on the axiomatic formulation of the notion of entropy due to Shannon and Khinchin.
The first three Shannon-Khinchin axioms \cite{Shannon}, \cite{Shannon2} \cite{Khinchin} represent fundamental, non-negotiable requirements that an entropy $S[p]$ should satisfy to be physically meaningful. Essentially, they amount to the following properties: \\
\noi (SK1) $S[p]$ is continuous with respect to all variables $p_1,\ldots,p_W$. \\
\noi (SK2) $S[p]$ takes its maximum value over the uniform distribution (it implies concavity). \\
\noi (SK3) $S[p]$ is expansible: adding an event of zero probability does not affect the value of $S[p]$.

However, although necessary, these properties are still not sufficient for thermodynamical purposes. Indeed, we need another crucial ingredient: \textit{composability} \cite{Tbook}. In \cite{PT2016AOP}, \cite{PT2016PRA}, this property has been reformulated and related to group theory as follows.


\subsection{Composability axiom} An entropy is said to be \textit{composable} if there exists a smooth function $\Phi(x,y)$ such that, given two statistically independent subsystems $A$ and $B$ of a complex system, $S(A\cup B)=\Phi(S(A), S(B))$, when the two subsystems are defined over \textit{any arbitrary probability distribution} of $\mathcal{P}_{W}$. In addition, we shall require that: \\
\noi (C1) Symmetry: $\Phi(x,y)=\Phi(y,x)$. \\
\noi (C2) Associativity: $\Phi(x,\Phi(y,z))=\Phi(\Phi(x,y),z)$. \\
\noi (C3) Null-Composability: $\Phi(x,0)=x$. \\

Observe that the requirements $(C1)$--$(C3)$ are fundamental ones: they impose the independence of the composition process with respect to the order of $A$ and $B$, the possibility of composing three independent subsystems in an arbitrary way, and the requirement that, when composing a system with another one having zero entropy, the total entropy remains unchanged. In our opinion, these properties are also ``non-negotiable": indeed, no thermodynamics would be easily conceivable without these properties. From a mathematical point of view, the properties above define a \textit{group law}. In this respect, the theory of formal groups \cite{Boch}, \cite{Haze}, \cite{Serre} offers a natural language in order to formulate the theory of generalized entropies.
Notice that the above construction define a full group, since the existence of a power series $\phi(x)$ such that $\Phi(x,\phi(x))=0$ (i.e. the ``inverse") is a consequence of the previous axioms \cite{BMN, Haze}.
Let $\{p_i\}_{i=1,\cdots,W}$, $W\geq 1$, with $\sum_{i=1}^{W}p_i=1$, be a discrete probability distribution; the set of all discrete probability distributions with $W$ entries will be denoted by $\mathcal{P}_{W}$.
\begin{definition}
A group entropy is a function $S: P_{W}\in \mathbb{R}^{+} \cup \{0 \}$ which satisfies the axioms (SK1)-(SK3) and the composability axiom.
\end{definition}
For recent applications of the notion of group entropy in Information Geometry and the theory of divergences \cite{Amari2016}, see e.g. \cite{RRT2019}.
\section{The extensivity postulate}
Our approach is crucially based on the following extensivity postulate
\par
\textbf{Postulate [EP]}. \textit{Given a system in its most disordered state (the uniform distribution), the amount of its disorder increases proportionally to the number $N$ of its constituents}.
\par
In other words, we shall require that, if $S$ is an information measure of order/disorder for that system, we must have $S(N)/N =const$. A weaker condition, suitable for macroscopic systems, is the asymptotic condition
\beq \label{ext}
\lim_{N\to\infty} \frac{S(N)}{N}=const.
\eeq
We stress that in this paper we are not considering thermodynamics, but a purely information-theoretical context.
In order to satisfy the previous postulate, we shall construct \textit{entropic} information measures, based on group entropies, namely a class of generalized entropies constructed axiomatically from group
theory.

For entropic functionals, the postulate EP  corresponds to requiring \textit{extensivity}.  Indeed, a fundamental requirement, pointed out already by Clausius, is that thermodynamic entropy, \textit{as a function of $N$}, must grow linearly in $N$ in the thermodynamic limit when $N\to \infty$. It is immediately seen from the classical relation $S=k_{\mathcal{B}} \ln W$, valid in the case of equal probabilities, that Boltzmann's entropy is extensive for the universality class $W(N)\sim k^{N}$,which typically contains ergodic systems.  However, the Shannon entropy $S=-\sum_{i=1}^{W} p_i \ln p_i$ is not extensive over other universality classes. Therefore, in order to satisfy the postulate EP, it appears clear that new entropic functionals should be found.



A crucial problem emerges naturally: given a complex system, how can one associate to it a meaningful information measure? This is the main question we address in this paper. We will prove that, surprisingly, there is a possible answer, simple and deductive.

Our main result is indeed the following: \textit{For any universality class of systems there exists an information measure satisfying the axioms (SK1)--(SK3), the composability axiom and postulate EP}.


Consequently, we shall propose a deductive construction of a group entropy associated with a given universality class. At the heart of this construction, there is a very simple idea: an admissible entropic information measure has associated an intrinsic group-theoretical structure, responsible of essentially all the properties of the considered entropy. This structure, provided by a specific group law, comes from the idea of allowing the composition of statistically independent systems in a thermodynamically meaningful way.

\section{A dual construction of entropies}
 Let $G\left( t\right) =\sum_{k=1}^{\infty} a_k \frac{t^{k}}{k}$ be a real analytic function, where $\{a_{k}\}_{k\in\mathbb{N}}$ a real sequence, with $a_{1}=1$, such that the function  $S_{U}:\mathcal{P}_{W}\rightarrow \mathbb{R}^{+}\cup\{0\}$, 
defined by
\beq
S_{U}(p_1,\ldots,p_W):=\sum_{i=1}^{W}p_i \hspace{1mm} G\left(\ln \frac{1}{p_i}\right), \label{entropy}
\eeq
is a concave one. This function is the \textit{universal-group entropy}.
The two most known examples of entropies of this class are the Boltzmann-Gibbs entropy $S_{BG}[p]=\sum_{i=1}^{W}p_i \ln \left(\frac{1}{p_i}\right)$ and the Tsallis entropy $S_{q}=\frac{1-\sum_{i=1}^{W}p_{i}^{q}}{q-1}$ \cite{Tsallis1}. Essentially all the entropic functionals known in the literature are directly related with the class \eqref{entropy}.
The composability axiom widely generalizes the fourth Shannon-Khinchin axiom. Needless to say, when $\Phi(x,y)=x+y$, we get back the original version of the axiom (SK4), which states the additivity of the Boltzmann entropy with respect to the composition of two statistically independent subsystems. For $\Phi(x,y)=x+y+(1-q) xy$, we have the composition law of $S_{q}$ entropy, and so on.

The composability property, if required for any choice of the probability distributions allowed to $A$ and $B$, is a nontrivial one. A theorem proved in \cite{ET2017JSTAT} states indeed that in the class of trace-form entropies only the entropies $S_{BG}$ and $S_{q}$ are composable (uniqueness theorem).

However, the remaining cases can be at most \textit{weakly composable}: the group law $\Phi(x,y)$ is defined at least over the uniform distribution. This case is especially important for thermodynamics, but it is not sufficient to cover many other physical situations.


The above discussion motivates the study of a family of entropies which are not in the trace-form class. In \cite{PT2016PRA}, the family of $Z$-entropies has been introduced. They generalize both the Boltzmann and the R\'enyi entropies and are \textit{strongly composable}.

An important result, due to Lazard, states that there exists a power series of the form $G(t)=t+a_2/ 2 t^2+\ldots$ such that, given a smooth function $\Phi(x,y)$
that satisfies the properties  $(C1)$-$(C3)$, it can be represented in the form (See SM)
\beq
\Phi(x,y)=G\left(G^{-1}(x)+G^{-1}(y)\right). \label{Phi}
\eeq

The general form of the entropies of this class is given by
\beq
Z_{G,\alpha}(p_1,\ldots,p_W):=\frac{\ln_{G} \left(\sum_{i=1}^{W} p_{i}^{\alpha}\right)}{1-\alpha}, \label{Zent}
\eeq
where $\ln_{G}(x)$ is a generalized logarithm  and $\alpha$ is a real parameter. When $0<\alpha<1$, the $Z_{G,\alpha}$ entropy is concave; when $\alpha>0$, is Schur-concave. Precisely, a generalized group logarithm is a continuous, concave, monotonically increasing function $\ln_{G}: \mathbb{R} \to \mathbb{R}$, possibly depending on a set of real parameters, satisfying a functional equation (a group law). It can be considered to be a deformation of the standard Neperian logarithm (See SM).

 For the purposes this paper, the concavity requirement will not be necessary. Throughout this paper, we shall put the Boltzmann constant $k_{\mathcal{B}}=1$.

\section{The main reconstruction theorem: From phase space to group entropies}

\subsection{Main result}
The following theorem formalizes in a rigorous way, completes and unifies several preliminary ideas already expressed in a heuristic way in previous papers of ours \cite{PT2016PRA}, \cite{JT2018ENT}. However the main result is new: a compact expression of an entropy directly expressed as a function of the phase space growth rate. Here we shall replace $N$ with a continuous variable interpolating the discrete values of $N$. Also, we shall introduce a sufficiently regular function $\mathcal{W}=\mathcal{W}(x)$, that we shall call a \textit{growth function}. For the purposes of this article, from now on we shall require that $\mathcal{W}$ is at least a  monotonic, strictly increasing function of class $C^{1}(\mathbb{R}^{+})$.  Hereafter, $W$ will denote the integer part of $\mathcal{W}$\footnote{Usually, in the literature $W(N)$ and $\mathcal{W}(N)$ are identified for notational simplicity. For large values of $N$, the discrepancy between the two values is numerically very small}.

\begin{theorem}\label{theo1}
Let $\mathcal{W}$ be a phase space growth function, corresponding to a given universality class of statistical systems. Then there exists a unique entropy in the $Z$-class which satisfies the extensivity postulate. This entropy is given by
\beq \label{maineq}
Z_{G,\alpha}(p_1,\ldots,p_{W})= \frac{1}{(\mathcal{W}^{-1})'(1)}\left(\mathcal{W}^{-1}\left(\bigg(\sum_{i=1}^{W}p_i^{\alpha}\bigg)^{\frac{1}{1-\alpha}}\right)- \mathcal{W}^{-1}(1) \right)\ ,
\eeq
with $\alpha>0$ and it is assumed $(\mathcal{W}^{-1})'(1)\neq 0$.
\end{theorem}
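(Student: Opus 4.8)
The plan is to exploit that every entropy in the $Z$-class is completely determined by its generalized logarithm $\ln_{G}$, and that the extensivity postulate, once evaluated on the uniform distribution, becomes a functional equation pinning $\ln_{G}$ down. I treat the generalized logarithm as a deformation of the Neperian logarithm, normalized by $\ln_{G}(1)=0$ and $\ln_{G}'(1)=1$, the latter reflecting the normalization $a_{1}=1$ (equivalently $G(t)=t+\cdots$) of the formal-group generator underlying \eqref{Phi}. First I would evaluate $Z_{G,\alpha}$ of the form \eqref{Zent} on the uniform distribution $p_{i}=1/W$ with $W=\mathcal{W}(N)$: since $\sum_{i=1}^{W}p_{i}^{\alpha}=W^{1-\alpha}$, the entropy seen as a function of the (continuous) variable $N$ reads
\[
Z(N)=\frac{\ln_{G}\!\big(\mathcal{W}(N)^{1-\alpha}\big)}{1-\alpha}.
\]

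Next I would translate postulate EP into the requirement that $Z(N)$ grow linearly, i.e.\ be affine, $Z(N)=cN+d$; this is the precise reading of ``the disorder increases proportionally to $N$'' that yields a \emph{unique} answer, the asymptotic version \eqref{ext} fixing only the leading slope. Writing $\xi=\mathcal{W}(N)^{1-\alpha}$ and using that $\mathcal{W}$ is strictly increasing and of class $C^{1}$ — so that $N\mapsto\xi$ is a $C^{1}$ bijection onto its range with inverse $N=\mathcal{W}^{-1}(\xi^{1/(1-\alpha)})$ — the affinity condition becomes
\[
\ln_{G}(\xi)=(1-\alpha)\Big(c\,\mathcal{W}^{-1}\big(\xi^{1/(1-\alpha)}\big)+d\Big).
\]
The two normalizations of the generalized logarithm then fix the free constants: imposing $\ln_{G}(1)=0$ gives $d=-c\,\mathcal{W}^{-1}(1)$, while differentiating and imposing $\ln_{G}'(1)=1$ gives, by the chain rule, $c\,(\mathcal{W}^{-1})'(1)=1$, i.e.\ $c=1/(\mathcal{W}^{-1})'(1)$ (well defined precisely because $(\mathcal{W}^{-1})'(1)\neq0$, and positive since $\mathcal{W}$ is increasing). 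Substituting back and using $Z_{G,\alpha}=\ln_{G}(\sum_{i}p_{i}^{\alpha})/(1-\alpha)$ reproduces exactly \eqref{maineq}; reading the computation backwards gives uniqueness, since any extensive $Z$-entropy must satisfy the same functional equation with the same normalizations.

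For existence it remains to certify that the $\ln_{G}$ just obtained is a bona fide admissible generalized logarithm, equivalently that it carries a formal-group structure. I would set $G(s):=\ln_{G}(e^{s})=\frac{1-\alpha}{(\mathcal{W}^{-1})'(1)}\big(\mathcal{W}^{-1}(e^{s/(1-\alpha)})-\mathcal{W}^{-1}(1)\big)$ and verify directly that $G(0)=0$, $G'(0)=1$, and that $G$ is strictly increasing and $C^{1}$; invertibility of $G$ then furnishes the composition law $\Phi(x,y)=G(G^{-1}(x)+G^{-1}(y))$ of \eqref{Phi}, so that $\ln_{G}(xy)=\Phi(\ln_{G}x,\ln_{G}y)$ and the resulting entropy is strongly composable, the axioms (SK1)--(SK3) being inherited from membership in the $Z$-class. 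Monotonicity of $\ln_{G}$ holds for every $\alpha>0$ because the sign of the prefactor $1-\alpha$ compensates the sign of the exponent $1/(1-\alpha)$ while $\mathcal{W}^{-1}$ is increasing; by the remark preceding the theorem, concavity need not be checked.

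The step I expect to be the genuine obstacle is the uniqueness/normalization bookkeeping rather than any individual computation: one must argue that imposing exact affinity of $Z(N)$ over the \emph{continuum} of uniform distributions, together with the built-in normalizations $\ln_{G}(1)=0$ and $\ln_{G}'(1)=1$, determines $\ln_{G}$ on the entire range of $\sum_{i}p_{i}^{\alpha}$. This is exactly where strict monotonicity of $\mathcal{W}$ (bijectivity of the change of variables) and the hypothesis $(\mathcal{W}^{-1})'(1)\neq0$ are indispensable, and where one must be careful that the asymptotic reading \eqref{ext} alone would leave sub-leading freedom.
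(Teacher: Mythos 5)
Your proposal is correct and follows essentially the same route as the paper's proof: evaluate the $Z$-entropy on the uniform distribution of the class $\mathcal{W}(N)$, impose affine growth $kN+k_0$, invert the change of variables, and fix the two free constants via the normalizations $G(0)=0$ and $G'(0)=1$ (your $\ln_{G}(1)=0$, $\ln_{G}'(1)=1$). The only cosmetic difference is that you work with $\ln_{G}$ and the variable $\xi=\mathcal{W}(N)^{1-\alpha}$, whereas the paper works with the group exponential $G$ and $t=\ln\mathcal{W}(N)^{1-\alpha}$; these are equivalent via $\xi=e^{t}$.
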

\textit{Proof}.
Let us assume that the asymptotic behaviour of $Z_{G,\alpha}(p)$, for large values of $N$, is given by
\beq \label{ZkN}
Z_{G,\alpha}[\mathcal{W}(N)]=k N +k_{0}
\eeq
for suitable constants $k$ and $k_{0}$ (which a priori could depend on thermodynamic variables, but not on $N$). This condition obviously implies extensivity, namely eq. \eqref{ext}. The more general form \eqref{ZkN}, which also includes  the constant $k_{0}$ is introduced for further convenience.
\par
Any generalized logarithm can be represented in the form (see SM)
\beq
\ln_{G}(x)=G(\ln x),
\eeq
for an invertible function $G$, whose behaviour around zero is given by
\beq \label{eq:Garound0}
G(t)= t+ O(t^2)
\eeq
which also implies the property
\beq\label{G0}
G(0)=0
\eeq
This function is the ``group exponential'' that defines the composition law of the entropy by means of relation \eqref{Phi}. Therefore from eqs.  \eqref{Zent}, \eqref{ZkN}, we obtain
\beq \label{eq:Gamma}
G\left(\ln \mathcal{W}(N)^{1-\alpha}\right)=(1-\alpha) k N + \Gamma \ ,
\eeq
where $\Gamma= (1-\alpha)k_{0}$ is another constant. Hence the relation between the phase space growth rate and the group exponential is explicitly given by
\beq
\mathcal{W}(N)=\left(e^{G^{-1}\left[(1-\alpha) k N + \Gamma\right]}\right)^{\frac{1}{^{1-\alpha}}} \ .
\eeq
Let
\[
t:= \ln \mathcal{W}(N)^{1-\alpha} \Longleftrightarrow N=\mathcal{W}^{-1}(e^{\frac{t}{1-\alpha}}) \ .
\]
We obtain from eq. \eqref{eq:Gamma}
\[
G(t)=k (1-\alpha)\mathcal{W}^{-1}(e^{\frac{t}{1-\alpha}})+\Gamma \ .
\]
which implies, using relation \eqref{G0}
\[
G(0)=k \mathcal{W}^{-1}(1) + \Gamma =0 \Longleftrightarrow \Gamma = -k (1-\alpha) \mathcal{W}^{-1}(1) \ ,
\]
Consequently we get
\beq
G(t)=k(1-\alpha)\left(\mathcal{W}^{-1}\left(e^{\frac{t}{1-\alpha}}\right)- \mathcal{W}^{-1}(1)\right).
\eeq
In order to fix the constant $k$, it is sufficient to observe that the condition \eqref{eq:Garound0} implies that
\[
k= \frac{1}{(\mathcal{W}^{-1})'(1)},
\]
where we assume $(\mathcal{W}^{-1})'(1)\neq0$. In other words, the group exponential can be uniquely determined by the specific choice of a universality class of systems.
\par
From the explicit expression of this function we can reconstruct the entropy we are looking for:
\bea \nn
Z_{G,\alpha}[p]= \frac{\ln_{G} \left(\sum_{i=1}^{W} p_{i}^{\alpha}\right)}{1-\alpha}= \frac{1}{(\mathcal{W}^{-1})'(1)}\left(\mathcal{W}^{-1}\left(\bigg(\sum_{i=1}^{W}p_i^{\alpha}\bigg)^{\frac{1}{1-\alpha}}\right)-\mathcal{W}^{-1}(1) \right) \ .
\\ \label{ZW}
\eea

The constant appearing in the r.h.s. of the previous formula guarantees that the entropy vanishes over a certainty state, namely for a distribution where $\exists~i$ such that $p_i=1$, $p_j=0$, $j\neq i$.

Observe that for $\alpha>1$, then $Z_{G,\alpha}[p]$ is still a non-negative function, as it can be ascertained, for instance, by noticing that $G\left(\ln \big(\sum_{i=1}^{W} p_{i}^{\alpha}\big)\right)<0$. 

The case $\alpha\to 1$ is admissible and interesting by itself, and gives us 
\beq
Z_{G,1}[p]= \frac{1}{(\mathcal{W}^{-1})'(1)}\left(\mathcal{W}^{-1}\left(\exp(S_{BG}[p]\right)-\mathcal{W}^{-1}(1) \right)
\eeq
where $S_{BG}[p]$ is the Boltzmann-Gibbs entropy. To conclude the proof, one can ascertain that since $Z_{G,\alpha}[p]$ is the composition of a strictly increasing function with a function which is strictly Schur-concave for $\alpha>0$, then it is still strictly Schur concave in the same interval (see e.g. \cite{MOA2010}, page 89 and \cite{RV1973}). This property is sufficient for satisfying the maximum entropy axiom.
$\Box$
\begin{remark}
The entropy $Z_{G,\alpha}$ can also be expressed up to a multiplicative constant, in the form
\beq
Z_{G,\alpha}(p_1,\ldots,p_{W})= k \left(\mathcal{W}^{-1}\left(\bigg(\sum_{i=1}^{W}p_i^{\alpha}\bigg)^{\frac{1}{1-\alpha}}\right)- \mathcal{W}^{-1}(1) \right)\ ,
\eeq
which does not require the condition $(\mathcal{W}^{-1})'(1)\neq0$. This formulation is consistent with the possibility of changing the units of thermodynamic quantities.
\end{remark}
\begin{remark}
In many applications, the growth function $\mathcal{W}$ is convex in its domain.  Then the resulting $Z_{G,\alpha}[p]$ in particular is concave in a suitable finite interval of values of $\alpha>0$, as in the general construction of \cite{PT2016PRA}. When dealing with  different regularity properties,  for instance in the case of complex systems whose number of degrees of freedom is monotonically decreasing as a function of the number $N$ of particles,  the previous construction should be properly modified.
\end{remark}

An interesting question should be addressed, namely the uniqueness of the previous construction. Certainly, we can derive other group entropies with similar properties. More precisely, a theorem proved in \cite{RRT2019} shows that for a given group-theoretical structure, one can associate a ``tower'' of group entropies sharing the same structure.
At the same time, the $Z$-family defined in eq. \eqref{Zent} is \textit{complete}: as we have just proved, for each universality class there exists a representative in the $Z$-family playing the role of admissible entropy. Also, this representative presents the ``simplest'' functional form within the ``tower'' associated with a given group structure.
Under mild hypotheses, Theorem \ref{theo1} solves completely the problem of determining an entropy suitable for a given universality class. It represent a conceptually and practical powerful tool for constructing infinitely many new entropic functionals (all of them group entropies) tailored for complex systems, emerging from very different contexts: physics, social sciences, etc. Our next step would be to construct an information measure for each of these new entropies. Rather than an ``universal entropy'', valid for any possible complex system, we have, what may be considered more reasonable, a \textit{specific} entropy, unique in the $Z$-class, for each universality class of systems.

\subsection{The group-theoretical structure associated with W(N)}
By construction, the following important property holds.
\begin{proposition}
Let  $\mathcal{W}=\mathcal{W}(x)$ be a phase space growth function.
The corresponding entropy  $Z_{G,\alpha}$ \eqref{maineq} is strictly composable, namely
\[
Z_{G,\alpha}(A\cup B)= \Phi(Z_{G,\alpha}(A), Z_{G,\alpha}(B))
\]
for all complex systems $A$ and $B$, where the group law $\Phi(x,y)$ associated  can be written in terms of $\mathcal{W}$ as
\beq
\phi(x,y)=\lambda \left\{\mathcal{W}^{-1}\Big[
\mathcal{W}\left(\frac{x}{\lambda}+\mathcal{W}^{-1}(1)\right) \mathcal{W}\left(\frac{y}{\lambda}+\mathcal{W}^{-1}(1)\right)
\Big] -\mathcal{W}^{-1}(1)\right\} \ ,
\label{phi_W}
\eeq
where $\lambda= \frac{1}{(\mathcal{W}^{-1})'(1)}$.
\end{proposition}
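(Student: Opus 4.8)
The plan is to reduce strict composability to a single algebraic fact—that the power-sum $\sum_i p_i^\alpha$ factorizes over independent systems—and then to transport this multiplicative structure through the explicit entropy derived in Theorem~\ref{theo1}. It is convenient to abbreviate $M_A:=\big(\sum_i p_i^\alpha\big)^{1/(1-\alpha)}$ for the distribution $\{p_i\}$ defining $A$, so that \eqref{maineq} reads simply $Z_{G,\alpha}(A)=\lambda\big(\mathcal{W}^{-1}(M_A)-\mathcal{W}^{-1}(1)\big)$ with $\lambda=1/(\mathcal{W}^{-1})'(1)$. The goal is then to establish $Z_{G,\alpha}(A\cup B)=\Phi(Z_{G,\alpha}(A),Z_{G,\alpha}(B))$ with $\Phi$ equal to the $\phi$ of \eqref{phi_W}.

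First I would settle the decisive step, the behaviour of $M$ under composition. If $A$ and $B$ are statistically independent, with distributions $\{p_i\}_{i=1}^{W_A}$ and $\{q_j\}_{j=1}^{W_B}$, then $A\cup B$ carries the product distribution $\{p_i q_j\}$, and
\[
\sum_{i,j}(p_i q_j)^\alpha=\Big(\sum_i p_i^\alpha\Big)\Big(\sum_j q_j^\alpha\Big),
\]
so that $M_{A\cup B}=M_A\,M_B$. I would stress that this identity holds for \emph{arbitrary} distributions on $A$ and $B$, not only the uniform ones; this is exactly what upgrades the conclusion from weak to \emph{strict} composability, and it is the property that distinguishes the $Z$-class from the trace-form entropies discussed after \eqref{entropy}.

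Next I would invert the entropy relation. As $\mathcal{W}$ is monotone increasing and $C^1(\mathbb{R}^+)$ it is invertible, and solving $Z_{G,\alpha}(A)=\lambda(\mathcal{W}^{-1}(M_A)-\mathcal{W}^{-1}(1))$ for $M_A$ gives
\[
M_A=\mathcal{W}\!\left(\frac{Z_{G,\alpha}(A)}{\lambda}+\mathcal{W}^{-1}(1)\right),
\]
and likewise for $B$. Substituting the product $M_A M_B$ into $Z_{G,\alpha}(A\cup B)=\lambda(\mathcal{W}^{-1}(M_{A\cup B})-\mathcal{W}^{-1}(1))$ reproduces exactly the right-hand side of \eqref{phi_W}, identifying $\Phi$ with $\phi$. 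To close the argument I would verify the group axioms (C1)--(C3): symmetry is immediate from the symmetric roles of $M_A$ and $M_B$; null-composability $\phi(x,0)=x$ holds because a certainty state gives $M=1$, hence $Z=0$, so that setting $y=0$ makes the inner factor $\mathcal{W}(0/\lambda+\mathcal{W}^{-1}(1))$ equal to $1$ and leaves $M_A$ untouched; associativity is inherited from the associativity of multiplication of the $M$'s, equivalently from the formal-group representation \eqref{Phi}.

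The part demanding the most care is the bookkeeping of the prefactor $1/(1-\alpha)$ carried by $\ln_G$ in the definition \eqref{Zent}. The quantity that is additive under composition (before $G$ is applied) is $\ln\sum_i p_i^\alpha$, whereas the variable $x=Z_{G,\alpha}(A)$ whose group law we seek is $\ln_G\sum_i p_i^\alpha$ divided by $1-\alpha$; consequently the group exponential appropriate to $Z_{G,\alpha}$ is $G/(1-\alpha)$ rather than the $G$ of Theorem~\ref{theo1}, and a literal substitution of $G$ into \eqref{Phi} would introduce spurious factors of $(1-\alpha)$ in both the outer amplitude and the arguments. Working with the invariant $M_A$ bypasses this entirely and yields directly the coefficient $\lambda$ and the argument $x/\lambda$ of \eqref{phi_W}. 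I expect this normalization to be the sole genuine subtlety; the rest is the transparent transport of the multiplicative law $M_{A\cup B}=M_A M_B$ through the monotone change of variables $\mathcal{W}^{-1}$.
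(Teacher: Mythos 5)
Your proof is correct, and it hinges on the same decisive fact as the paper's: for statistically independent systems the power sum factorizes, $\sum_{i,j}(p_iq_j)^\alpha=\big(\sum_i p_i^\alpha\big)\big(\sum_j q_j^\alpha\big)$, for \emph{arbitrary} distributions, which is what gives strict (not merely weak) composability. Where you differ is in how this multiplicative structure is transported to the entropy. The paper stays at the level of the group exponential: it introduces $\chi(x)=\frac{1}{1-\alpha}G\big((1-\alpha)x\big)$ — precisely the $(1-\alpha)$ rescaling you single out as the one genuine subtlety — and shows $Z_{G,\alpha}(A\cup B)=\chi\big(\chi^{-1}(Z_{G,\alpha}(A))+\chi^{-1}(Z_{G,\alpha}(B))\big)$, i.e.\ composability inherited abstractly from the formal-group representation \eqref{Phi}; however, it never substitutes the explicit $G(t)=\lambda(1-\alpha)\big(\mathcal{W}^{-1}(e^{t/(1-\alpha)})-\mathcal{W}^{-1}(1)\big)$ of Theorem~\ref{theo1} to verify that this abstract law coincides with the stated formula \eqref{phi_W}; that final identification is left implicit. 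You instead work directly in the $\mathcal{W}$-variables through the invariant $M_A=\big(\sum_i p_i^\alpha\big)^{1/(1-\alpha)}$: its multiplicativity $M_{A\cup B}=M_AM_B$, pushed through the monotone change of variables $\mathcal{W}^{-1}$, produces \eqref{phi_W} by direct substitution, and you check (C1)--(C3) explicitly. So your route is more complete as a proof of the proposition \emph{as stated} (the explicit $\mathcal{W}$-form of the group law is actually derived, and your $M$-computation also sidesteps the typographical slip in the paper's display, where $G$ is applied to only the first logarithm), while the paper's $\chi$-formulation has the merit of exhibiting the group law as a conjugated addition in the sense of \eqref{Phi}, independent of the particular $\mathcal{W}$-realization \eqref{maineq}; the two are reconciled by noting that $\chi(x)=\lambda\big(\mathcal{W}^{-1}(e^x)-\mathcal{W}^{-1}(1)\big)$, so $e^{\chi^{-1}(x)}=\mathcal{W}\big(\tfrac{x}{\lambda}+\mathcal{W}^{-1}(1)\big)$, which is exactly your inversion step.
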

\begin{proof}
Let us introduce the function
\[
\chi(x):= \frac{1}{1-\alpha}G\big((1-\alpha)x\big) \ .
\]
We observe that
\[
Z_{G,\alpha}(A\cup B)= \frac{G\left(\ln \left(\sum_{i,j} (p_{ij}^{A\cup B})^{\alpha}\right)\right)}{1-\alpha}=  \frac{G\left(\ln \left(\sum_{i} (p_{i}^{A})^{\alpha}\right)\right)+\ln \left(\sum_{j} (p_{j}^{B})^{\alpha}\right)}{1-\alpha}
\]
\[
= \frac{G\left(G^{-1}\left((1-\alpha) Z_{G,\alpha}(p_{i}^{A})\right)+G^{-1}\left((1-\alpha) Z_{G,\alpha}(p_{j}^{B})\right)\right)}{1-\alpha}=\Phi(Z_{G,\alpha}(A), Z_{G,\alpha}(B))
\]
where $\Phi(x,y)=\chi(\chi^{-1}(x)+\chi^{-1}(y))$.
\end{proof}


All the previous discussion can be summarized as follows: Given an universality class of systems whose occupancy law is assigned, we are able to construct an entropic functional which is extensive at the equilibrium, according to the classical principles of thermodynamics, and the requirements of large deviation theory  \cite{Ellis2006}.
Also, the entropy in eq. \eqref{maineq} satisfies the first three SK axioms and is composable, with group law given by the relation \eqref{Phi}.

Though each entropy of the class $Z_{G, \alpha}$ is extensive at the equilibrium in the specific class corresponding to a given phase space growth rate $\mathcal{W}(N)$, in general it may not be so if applied to systems with a different functional dependence of $\mathcal{W}$, i.e. having other occupation laws at the equilibrium.  This is certainly not surprising, since it is also the behaviour of the standard Boltzmann-Gibbs entropy.

\section{Complexity measures from group entropies}

\subsection{Shannon-type integrated measures}

The traditional approach to quantify degrees of interdependence between a number of components in a complex system is based on Shannon's entropy and investigates the difference between combinations of conditioned entropies. An influential example is Tononi's Integrated Information Theory~\cite{Tononi_2012,Balduzzi_Tononi_2008}, which suggests that consciousness can be detected from measures based on Shannon's entropy which by decompositions analyse the relationship if the information stored in the whole and in parts. A related very recent approach seeks to analyse self-organisation of synergetic interdependencies by a focus on joint Shannon entropies~\cite{Rosas_2018}. And finally let us mention the recent Entropic Brain Hypothesis~\cite{CH_2014}, which relates increased consciousness to increase in the Shannon entropy, with less emphasis on how the interdependence of the conscious state is to be characterised.

The group entropies suggest an alternative approach relevant when the number of degrees of freedom $W(N)$ of the entire complex system is different from the Cartesian product of the degrees of freedom of the parts. Intuitively, it appears that faster than exponential growth of $W(N)$ may apply to the brain. Of course we don't know the details of the relationship between the neuronal substrate and the activities of the mind, but at an anecdotal intuitive level it seems that the mind's virtually limitless capacity of deriving and combining associations of associations in grand hierarchical structures is an example of new emergent states added to what can be reached by Cartesian combinations. Say the sate I put my mind into when I try to imagine the mood of a piglet splashing through the waves out somewhere on the deep ocean while the it desperately tries to figure out how it may be able to use its great grand uncles old bagpipe as a means of flotation. Or to take a more frequently occurring mind state, namely the one induced when one listens to Bach's Chaconne from Partita No. 2 in D minor for solo violin. In both cases it seems unlikely that the state of the mind isn't in an emergent state beyond the Cartesian combinations.

Similarly, $W(N)$ may very likely grow much slower that the exponential dependence of Cartesian combination, say when one have a highly restricted system in mind, say, for arguments sake, the financial system under very strong regulations.

In such cases the group entropies offer a way to directly quantify the extend of systemic interdependence without going through part-wise conditioning. The composability axiom relates to how the whole of a system consisting of different parts differs from the system one would obtain by a simple Cartesian combination of the parts. The Shannon entropy can be thought of as directly focusing on the diversity in a system and then, as a second step, address interdependence e.g. by developing various conditioned measures. In contrast the group entropies are directly sensitive both to diversity and to interdependence when the later is sufficiently strong to make $W(N)\neq k^N$.

\subsection{A new indicator of complexity}
Namely, consider the difference
\beq
\Delta(AB) = S(A\times B)-S(AB)= \phi(S(A),S(B)) - S(AB).
\label{Delta}
\eeq
between the entropy  $S(A\times B)$ of the system constructed by Cartesian combination of parts $A$ and $B$and the entropy $S(AB)$ of the entire complex system  containing the fully interacting and interdependent part $A$ and $B$.

This measure can be thought of as a possible generalisation of the usual mutual information and could be useful as an alternative to
Tononi's  Integrated Information  \cite{Tononi_1994}, \cite{Tononi_1998}, \cite{Tononi_2001}, \cite{Balduzzi_Tononi_2008}, \cite{Tononi_2012}  as a measure that can quantify very entangled complex systems such as perhaps consciousness.

One of the attractions of the group theoretic foundation for the entropies is that it allows a consistent procedure that ensures one get the same value for the entropy when dealing with a system $A\times B$ that is, one may indeed say trivially, composed of two sub systems $A$ and $B$ by simply formally considering the combined system $A\times B$  to consist of the Cartesian set of states  $(a,b)$ where $a$ is a state in $A$ and $b$ a state in $B$. Though one may think of this as a mathematical consistency requirement for any entropy, since it should be able to handle such trivial situations, it is in fact much more than that. Namely, no less than the foundation of Boltzmann-Gibbs statistical mechanics and will always be a good approximation, though formally no more than that, in all cases where interdependence between sub-pats of a system can be neglected. Which, we repeat, is of course typically not the case for complex systems.

When long range forces of other kinds of long rage interdependence is at play the system $A\times B$ will be different from the system $AB$ in which all particles from $A$ and from $B$ are allowed to interact, combine and influence in whatever way the situation allows \cite{JPPT2018JPA}.

The complexity measure defined in Eq. (\ref{Delta}) is a way to quantify the extend of the difference between $A\times B$ and $AB$. Since the composition law $\phi(x,y)$ assumes the same functional dependence for trace and non-trace entropies when expressed in terms of $W(N)$ and its inverse, Eq. (\ref{phi_W}) we  conclude that the degree of complexity and its dependence of number of degree of freedom is fully controlled by the functional form of $W(N)$. For simplicity consider the situation where $A=B$ and each subsystem has $N_0$ degrees of freedom. Furthermore, if we restrict ourselves to the uniform ensemble  $p_i=1/W$, where the extensivity property of the group entropies simply make sure that $S[p]=\lambda N$, we have the following expressions for the complexity measure $\Delta(AA)$, which we simply denote $\Delta(N_0)$

\begin{itemize}
\item[(I)] Algebraic -- $W(N)=N^a$:
\beq
\Delta(N_0) = \lambda\bigg(N_0+N_0+\frac{N_{0}^2}{\lambda}\bigg)-\lambda (N_0+N_0)=N_0^2.
\eeq
We might call this the Tsallis case where the the interdependence between particles strongly restrict  the available phase space. The entropy of the Cartesian combination $A\times A$ over shoots the entropy of the fully entangled system $AA$ by $N_0$. One may think of this as indicating that the reduction of phase space involves a restrictive relation between each particle and the $N_0-1$ other particles.

\item[(II)] Exponential -- $W(N)=k^N$:
\beq
\Delta(N_0)=\lambda(N_0+N_0)-\lambda(N_0+N_0)=0.	
\eeq
The Boltzmann-Gibbs case where the entire system effectively is composed of a non-interdependent set of subsystems.
\item[(III)] Super-exponential $W(N)=N^{\gamma N}$:
\begin{align}
\Delta(N_0)&=\lambda\big\{\exp[L(2(1+N_0)\ln(1+N_0))]-1\big\}-\lambda(N_0+N_0)\nonumber\\
&\simeq \lambda(\exp[\ln(2N_0\ln N_0)]-2N_0\nonumber\\
&\simeq 2\lambda(N_0\ln N_0-N_0)\simeq 2\lambda \ln N_0!.	
\end{align}
Here we assumed $N_0\gg 1$ and made use of the fact that $L(x)\simeq \ln x$ asymptotically. \\
The effective dependence of the complexity measure on the factorial suggests  a relation to the super-exponential behaviour of $W(N)$ originating in the creation of new states by forming combinations the particles \cite{JPPT2018JPA}.
\end{itemize}

\section{The complexity of human brain: a conjecture}
Returning to the question of the complexity of neural networks, we point out that in principle the indicator $\Delta(N_0)$ can be used to study the function $W(N_0)$ for the case of human brain. For example, extract probabilities for the various states of the brain measured by brain scans from, say, single neurone potential measurements, fMRI or EEG recordings. The histogram of the simultaneously recorded signals will give us an estimate of the joint probability density $P(x_1,...,x_N)$ where $x_1(t)$ up to $x_N$ are the $N$ recorded times series.  We break the set of time series into two groups, each consisting of $N_0\leq N/2$ time series. Let $P(x_1,...,x_{2N_0})$ characterise the "full" system $AB$ discussed above. We can then extract the distributions for subsystems as the marginalised probabilities.  I.e. let our sub-systems $A$ and $B$ be given by

\[ P_A(x_1,...,x_{N_0}) =\int dx_{N_0+1}\cdots\int d x_N P(x_1,...,x_N)\]
\[P_B(x_{N_0+1},...,x_{2N_0}) =\int dx_1\cdots\int d x_{N_0}dx_{2N_0+1}\cdots dx_N P(x_1,...,x_N),\]
and form the Cartesian non-interacting system $A\times B$ described by
\[P_{AB}(x_1,...,x_{2N_0})=P_A(x_1...,x_{N_0})\times P_B(x_{N_0+1},...,x_{2N_0}). \]
We can now compute $\Delta$ in Eq. (\ref{Delta}) and by varying the number $N_0$ of data streams  included in the two sub-systems, we can check the $N_0$ dependence of $\Delta$. We conjecture that for the  brain, $\Delta$ will depend like $\ln N_0 ! $. This corresponds to case (III) above i.e. corresponding to $W(N)=N^{\gamma N}$. Consequently we can state the following
\vspace{2mm}

\textbf{Conjecture}: The number of brain states typically grows \textit{faster than exponentially} in the number of brain regions involved.

\section{Future perspectives}

The axiomatic approach proposed allows us to associate in a simple way an entropic function with an universality class of systems. In particular, the extensivity axiom selects among the infinitely many group entropies of the $Z$-class a unique functional, which possesses many good properties, all necessary for an information-theoretical interpretation of the functional as an information measure. The standard additivity must be replaced by a more general composability principle that ensures, that in the case of a system composed of statistically independent components, the properties of the compound are consistent with the those of its components.

Many research perspectives are worth being explored in the future.
Generally speaking, our formalism allows for a systematic generalisation of a statistical mechanics description to non-exponential phase spaces.

For instance, we believe that group-theoretical information measures in the study of self-organized criticality could replace Shannon's entropy in several contexts where the number of degrees of freedom grows in a non-exponential way.


We also mention that classifying complex systems without worrying about composability was done in \cite{KHT2018}.

The analysis of time series of data, from this point of view, offer another interesting possibility of testing the present theory.

A quantum version of the present approach, in reference with the study of quantum entanglement for many-body systems represent an important future objective our our research.

Work is in progress along these lines.

\appendix

\vspace{3mm}

\noindent \textbf{SUPPLEMENTARY MATERIAL}.

\vspace{3mm}

\textbf{Groups and entropies: a brief summary.} For sake of completeness, in order to have a self-contained exposition, we shall review here some results concerning formal group theory and its relation with the theory of generalized entropies, following closely refs. \cite{PT2016AOP} and \cite{PT2016PRA}.

%
%
%
%
%

\section{Formal group laws}

We will start by recalling some basic facts and definitions of the theory of formal groups (see \cite{Haze} for a thorough exposition, and \cite{BMN}, \cite{Serre} for a shorter introduction).

Let $R$ be a commutative associative ring  with identity, and $R\left\{ x_{1},\text{ }%
x_{2},..\right\} $ be the ring of formal power series in the variables $x_{1}$, $x_{2}$,
... with coefficients in $R$.

\begin{definition} \label{formalgrouplaw}
A commutative one--dimensional formal group law
over $R$ is a formal power series $\Psi \left( x,y\right) \in R\left\{
x,y\right\} $ such that \cite{Boch}
\begin{equation*}
1)\qquad \Psi \left( x,0\right) =\Psi \left( 0,x\right) =x,
\end{equation*}%
\begin{equation*}
2)\qquad \Psi \left( \Psi \left( x,y\right) ,z\right) =\Psi \left( x,\Psi
\left( y,z\right) \right) \text{.}
\end{equation*}
When $\Psi \left( x,y\right) =\Psi \left( y,x\right) $, the formal group law is
said to be commutative.
\end{definition}
The existence of an inverse formal series $\varphi
\left( x\right) $ $\in R\left\{ x\right\} $ such that $\Psi \left( x,\varphi
\left( x\right) \right) =0$ is a consequence of Definition \ref{formalgrouplaw}. Let  $B=\mathbb{Z}[b_{1},b_{2},...]$ be the  ring of integral polynomials in infinitely many variables. We shall consider the series $
F\left( s\right) = \sum_{i=0}^{\infty} b_i \frac{s^{i+1}}{i+1}$,
with $b_0=1$. Let $G\left( t\right)$ be its compositional inverse:
\beq \label{I.2}
G\left( t\right) =\sum_{k=0}^{\infty} a_k \frac{t^{k+1}}{k+1},
\eeq
 i.e. $F\left( G\left( t\right) \right) =t$. From this property, we deduce $a_{0}=1, a_{1}=-b_1, a_2= \frac{3}{2} b_1^2 -b_2,\ldots$.
The Lazard formal group law \cite{Haze} is defined by the formal power series
\[
\Psi_{\mathcal{L}} \left( s_{1},s_{2}\right) =G\left( G^{-1}\left(
s_{1}\right) +G^{-1}\left( s_{2}\right) \right).
\]
The coefficients of the power series $G\left( G^{-1}\left( s_{1}\right) +G^{-1}\left(
s_{2}\right) \right)$ lie in the ring $B \otimes \mathbb{Q}$ and generate over $\mathbb{Z}$ a subring $A \subset B \otimes \mathbb{Q}$, called the Lazard ring $L$.

For any commutative one-dimensional formal group law over any ring $R$, there exists a unique homomorphism $L\to R$ under which the Lazard group law is mapped into the given group law (the \textit{universal property} of the Lazard group).

Let $R$ be a ring with no torsion. Then, for any commutative one-dimensional formal group law $\Psi(x,y)$ over $R$, there exists a series $\psi(x)\in R[[x]] \otimes \mathbb{Q}$ such that
\[
\psi(x)= x+ O(x^2), \quad \text{and} \quad \Psi(x,y)= \psi^{-1}\left(\psi(x)+\psi(y)\right)\in R[[x,y]]\otimes \mathbb{Q}.
\]

The universal formal group plays the role of the general composition law admissible for the construction of the entropies of the $Z$-family.

See also \cite{Nov} and \cite{Quillen} for applications of formal groups in cobordism theory and \cite{PT2007CR}, \cite{PT2010ASN} and \cite{PT2015TRAN} for applications in number theory.

We also mention that in \cite{CT2019} the notion of \textit{formal rings} has been recently introduced as a natural extension of the notion of formal groups.

\section{Generalized logarithms and exponentials from group laws}
There is a simple construction allowing to define a generalized logarithm from a given group law.
\begin{definition} \label{defglog}
Let $G$ be a series of the form \eqref{I.2}. A  generalized group logarithm is a continuous, strictly concave, monotonically increasing function $\ln_{G}: (0,\infty)\to \mathbb{R}$, possibly depending on a set of real parameters, such that $\ln_{G}(\cdot)$ solves the functional equation for the group law corresponding to $G$, i.e
\beq
\ln_{G}(xy)= \Psi(\ln_{G}(x),\ln_{G}(y)) \label{glog}
\eeq
where
\beq
\Psi(x,y)=G(G^{-1}(x)+G^{-1}(y)). \label{GPsi}
\eeq
\end{definition}


It is easy to observe that the function $F_{G}(x)$ defined by
\beq
F_{G}(x):= G\left(\ln x\right)\label{Glog}
\eeq
satisfies  eq. \eqref{glog}, where $\Psi(x,y)$ is the group law \eqref{GPsi}.
Indeed, tt suffices to observe that
\bea \label{proofTh1}
F_{G}(xy)&=& G\left(\ln x+\ln y\right)=G\left(G^{-1}(F_{G}(x))+G^{-1}(F_{G}(y))\right)\\
\nn &=& \Psi(F_{G}(x),F_{G}(y)).
\eea 

The theorem can also be formulated in a field of characteristic zero in the class of formal power series. As is well known \cite{Haze}, for a 1-dimensional formal group law $\Psi(x,y)$ over a  torsion-free ring, there exists a formal series $G(t)$ of the form \eqref{I.2} that realizes eq. \eqref{GPsi}. Then the same relations \eqref{proofTh1} still hold.

\noi By way of an example, when $\Psi(x,y)=x+y$, we have directly that $G(t)=t$ and $\ln_{G}(x)=\ln x$. If
$\Psi(x,y)=x+y+(1-q)xy$, an associated function $G(t)$ is provided by $G(t)=\frac{e^{(1-q)t}-1}{1-q}$ and the group logarithm converts into the Tsallis logarithm
\beq
\ln_q(x):=\frac{x^{1-q}-1}{1-q}. \label{Tslog}
\eeq
\begin{remark}
The requirement of concavity of $\ln_{G}(x)$ is guaranteed by the condition
\beq
 a_{k} >    (k+1)  a_{k+1} \qquad \forall k\in\mathbb{N} \qquad \text{with } \{a_{k}\}_{k\in\mathbb{N}}\geq 0 \label{conc},
\eeq
which is also sufficient to ensure that the series $G(t)$ is absolutely and uniformly convergent with a radius $r=\infty$.
\end{remark}

In other words, given a group law, under mild hypotheses we may determine a generalized group logarithm by means of relation \eqref{Glog} and the condition \eqref{conc} (see \cite{PT2011PRE} for a construction of group logarithms from difference operators via the associated group exponential $G$).

\begin{definition} \label{defgexp}
The inverse of a generalized group logarithm will be called the associated generalized group exponential; it is defined by
\beq
exp_{G}(x)= e^{G^{-1}(x)}. \label{Gexp}
\eeq
\end{definition}
When $G(t)=t$, we have back the standard exponential; when $G(t)=\frac{e^{(1-q)t}-1}{1-q}$, we recover the $q$-exponential $e_{q}(x)=\left[1+(1-q)t\right]^{\frac{1}{1-q}}$, and so on.
\begin{remark}
From a computational point of view, observe that the formal compositional inverse $G^{-1}(s)$, such that $G(G^{-1}(s))=s$ and $G^{-1}(G(t))=t$ can be obtained by means of the Lagrange inversion theorem. We get the formal power series
\beq
G^{-1}(s)=s-\frac{a_1}{2}s^2+ \ldots
\eeq
By imposing the relation $\Psi(x,y)= G\left(G^{-1}(x)+G^{-1}(y)\right)$, with $\Psi(x,y)=x+y+ \text{higher order terms}$, we get a system of equations that allows to reconstruct the sequence $\{a_{k}\}_{k\in\mathbb{N}}$. Each element $a_{k}$  a priori depends on the set of parameters appearing in $\Psi(x,y)$.
\end{remark}

\section*{Acknowledgement}
This work has been partly supported by the research project
FIS2015-63966, MINECO, Spain, and by the ICMAT Severo Ochoa project
SEV-2015-0554 (MINECO). P.T. is member of the Gruppo Nazionale di Fisica Matematica (INDAM), Italy.

\section*{Author contribution}
P.T. and H.J.J developed the proposed research, wrote the paper and reviewed it in a happy and constructive collaboration.

\section*{Competing interests}
The authors declare no competing interests of any form.



\begin{thebibliography}{}





\bibitem{Amari2016} S. I. Amari, \textit{Information geometry and its applications}. Applied Mathematical Sciences. Japan: Springer  (2016).



\bibitem{Balduzzi_Tononi_2008} D. Balduzzi and G. Tononi, Integrated Information in Discrete Dynamical Systems: Motivation and Theoretical Framework. PLoS Comput Biol \textbf{4}, e1000091, (2008).




\bibitem{Boch} S. Bochner,  Formal Lie groups. Ann. Math. \textbf{47}, 192--201 (1946).





\bibitem{BMN}  V. M. Bukhshtaber, A. S. Mishchenko  and   S. P. Novikov, Formal groups and their role in the apparatus of algebraic topology. Uspehi Mat.
Nauk \textbf{26}, 2, 161--154, transl. Russ. Math. Surv. \textbf{26}, 63--90 (1971).




\bibitem{CH_2014} R.L. Carhart-Harris, R. Leech, P. Hellyer, M. Shanahan, A. Feilding, E. Tagliazucchi, D.R. Chialvo and D. Nutt, The
entropic brain: a theory of conscious states informed by neuroimaging research with psychedelic drugs. Front. Hum. Neurosci.
\textbf{8}:20 (2014)


\bibitem{CT2019} J. A. Carrasco and P. Tempesta, Formal rings, arxiv:1902.03665 (2019).



\bibitem{Cohen2002} E. Cohen, Statistics and Dynamics. Phys. A,  \textbf{305}, 19 (2002).

\bibitem{Ellis2006} R. Ellis, \textit{Entropy, Large Deviations, and Statistical Mechanics}, Springer-Verlag, Berlin (2006)

\bibitem{ET2017JSTAT} A. Enciso, P. Tempesta,  Uniqueness and characterization theorems for generalized entropies. J. Stat. Mech. Theory Exp., 123101 (2017).














\bibitem{Haze} M. Hazewinkel,    \textit{Formal Groups and Applications}, Academic Press, New York (1978).




\bibitem{JPPT2018JPA} H.J. Jensen, R.H. Pazuki, G. Pruessner, P. Tempesta,  Statistical mechanics of exploding phase spaces: Ontic open systems. J. Phys. A Math. Theor.  \textbf{51}, 375002 (2018).

\bibitem{JT2018ENT} H. J. Jensen and P. Tempesta, Group Entropies: From Phase Space Geometry to Entropy Functionals via Group Theory. Entropy, \textbf{20}, 804 (2018).



\bibitem{Khinchin} A. I. Khinchin, \textit{Mathematical Foundations of Information Theory}, Dover, New York (1957).

\bibitem{KHT2018} J. Korbel, R. Hanel and S. Thurner, Classification of complex systems by their sample-space scaling
exponents. New J. Phys. \textbf{20}, 093007 (2018).







\bibitem{MOA2010} A. Marshall, I. Olkin and B. Arnold, \textit{Inequalities, Theory of Majorization and its Applications}, Springer 2nd edition (2010).

\bibitem{MSB2018} P. A. M. Mediano, A. K. Seth, and A. B. Barrett, \textit{Measuring integrated information: Comparison of candidate measures in theory and simulation}, Entropy, \textbf{21}, 17, 2018.




\bibitem{Nov} S. P. Novikov,   The methods of algebraic topology from the point of view of cobordism theory. Izv. Akad. Nauk SSSR Ser. Mat. \textbf{31}, 885--951 (1967), transl. Math. SSR--Izv. \textbf{1} (1967), 827--913.


\bibitem{Quillen} D. Quillen,   On the formal group laws of unoriented and complex cobordism theory. Bull. Amer. Math. Soc. \textbf{75}, 1293--1298 (1969).




\bibitem{RV1973} A. Waine Roberts and D. E. Varberg, \textit{Convex functions}, Academic Press, New York, 1973.


\bibitem{RRT2019} M. A. Rodr\'iguez, A. Romaniega and P. Tempesta, \textit{A new class of entropic information measures, formal group theory and information geometry}. Proc. Royal Soc. A \textbf{475}, 20180633 (2019).

\bibitem{Rosas_2018} F. Rosas, P.A.M. Mediano, M. Ugarte and H.J. Jensen. An Information-Theoretic Approach to Self-Organisation: Emergence of Complex Interdependencies in Coupled Dynamical Systems. Entropy, \textbf{20}, 793 (2018).



\bibitem{Serre}  J.--P. Serre   \textit{Lie algebras and Lie groups}, Lecture Notes in Mathematics, 1500 Springer--Verlag (1992).

\bibitem{Shannon}  C. E. Shannon,   A mathematical theory of communication. Bell Syst. Tech. J. \textbf{27} (1948) 379--423, \textbf{27}  623--653 (1948).


\bibitem{Shannon2}  C. E. Shannon  and W. Weaver   \textit{The mathematical Theory of Communication}, University of Illinois Press, Urbana, IL (1949).


\bibitem {PT2007CR} P. Tempesta,   Formal groups, Bernoulli--type polynomials and $L$--series. C. R. Math. Acad. Sci. Paris, Ser. I \textbf{345}, 303--306 (2007).



\bibitem{PT2010ASN}  P. Tempesta,   L--series and Hurwitz zeta functions associated with the universal formal group. Annali Sc. Normale Superiore, Classe di Scienze, IX, 1--12 (2010).

\bibitem{PT2011PRE} P. Tempesta,  Group entropies, correlation laws and zeta functions. Phys. Rev. E \textbf{84}, 021121 (2011).

\bibitem{PT2015TRAN} P. Tempesta,   The Lazard formal group, universal congruences and special values of zeta functions. Trans. Am. Math. Soc.,  \textbf{367}, 7015-7028 (2015).





\bibitem{PT2015PRA} P. Tempesta,    A Theorem on the existence of generalized trace-form entropies. \textit{Proc. Royal Soc. A} \textbf{471}, 20150165 (2015).

\bibitem{PT2016AOP} P. Tempesta, Beyond the Shannon-Khinchin Formulation: The Composability Axiom and the Universal Group Entropy. \textit{Ann. Phys.} \textbf{365}, 180--197 (2016).



\bibitem{PT2016PRA} P. Tempesta, Formal Groups and $Z$--Entropies. Proc. Royal Soc. A, Vol. 472, 20160143
  (2016).


\bibitem{Tononi_2001} G. Tononi, Information measures for conscious experience. Arch. Ital. Biologie, \textbf{139},~367--371   (2001).

\bibitem{Tononi_2012} G. Tononi, Integrated information theory of consciousness: an updated account. Arch. Ital. Biologie \textbf{50}, 290-326, (2012).

\bibitem{Tononi_1998} G. Tononi, G. M. Edelman, and O. Sporns, Complexity and coherency: integrating information in the brain, Trends in Cognitive Sciences, vol. 2, no. 12, pp. 474 – 484, 1998.

\bibitem{Tononi_1994} G. Tononi, O. Sporns, and G. Edelman, A measure for brain complexity: relating functional segregation and integration in the nervous system. Proc. Nat. Acad. Sc., vol. 91, no. 11, pp. 5033– 5037, 1994.

\bibitem{Tbook}  C. Tsallis   \textit{Introduction to Nonextensive Statistical Mechanics--Approaching a Complex World}, Springer, Berlin (2009).



















































































\bibitem{Tsallis1} C. Tsallis, Possible generalization of the Boltzmann--Gibbs statistics, J. Stat. Phys. \textbf{52}, Nos. 1/2, 479--487 (1988).











\end{thebibliography}
\end{document}